\begin{document}
\newtheorem{theorem}{Theorem}
\newtheorem{lemma}{Lemma}
\newtheorem{remark}{Remark}
\newcommand{\Csum}{C_{\text{sum}}}
\newcommand{\snr}{P}
\newcommand{\prob}{\text{Pr}}
\title{{\small Proceedings of ITA Workshop, San Diego, CA, Jan-Feb, 2008.}\\Sum Capacity of the Gaussian Interference Channel in the Low Interference Regime}
\author{
\authorblockN{V. Sreekanth Annapureddy and Venugopal Veeravalli\\}
\authorblockA{Coordinated Science Laboratory \\
Department of Electrical and Computer Engineering\\
University of Illinois at Urbana-Champaign \\
vannapu2@uiuc.edu, vvv@uiuc.edu}
}
\maketitle
\begin{abstract}
New upper bounds on the sum capacity of the two-user Gaussian interference channel are derived. Using these bounds, it is shown that treating interference as noise achieves the sum capacity if the interference levels are below certain thresholds.
\end{abstract}
\section{Introduction}
Interference is a fundamental issue in the design of  communication networks, particularly wireless networks. Unlike thermal noise, interference has a structure since it is generated by other users. Can this structure be exploited to decrease the uncertainty and thus improve the performance of the communication network? If so, what are the optimal signalling strategies? In this paper, we show that exploiting the structure of the interference in a two-user Gaussian interference channel does not improve the overall system throughput  in the {\em low interference regime}. In other words, one can treat interference as noise and can still achieve the maximum possible throughput, if the interference levels are below certain thresholds.

The capacity region of the two-user Gaussian interference channel is known in the {\em strong interference} setting \cite{Carleial1975, HK1981, Sato1981}, where it is shown that each user can decode the information transmitted to the other user, and in the  trivial case when there is no interference. The sum capacity of the interference channel is known for the one-sided interference channel (also called the Z-Channel) \cite{Kramer2004,Sason2004,OneBit2007}, where treating interference as noise achieves the sum capacity, and the degraded interference channel \cite{Costa1985},\cite{Sason2004}, where one user treats interference as noise and the other user does interference cancelation.

Establishing the capacity region for a general two-user Gaussian interference channel still remains an open problem. The best known achievable strategy is the Han-Kobayashi scheme \cite{HK1981}, where each user splits the information into private and common parts. The common messages are decoded at both the receivers, thereby reducing the level of interference. Although Chong, Motani and Garg have recently derived a simple representation of the Han-Kobayashi achievable region \cite{CMG2007}, it still remains formidable to compute.

In \cite{OneBit2007}, the capacity region of a general two-user Gaussian interference channel is determined to {\em within one bit} by comparing a special case of the Han-Kobayashi scheme to the outer bounds derived in \cite{Kramer2004} and \cite{OneBit2007}. The concept of a {\em genie-aided} channel is used in deriving the outer bounds, where the receivers are provided with side information by a genie. The side information is chosen in such a way as to facilitate the computation of the capacity region of the genie-aided channel, which is an obvious outer bound to the capacity region of the interference channel.

In this paper, we tighten the outer bound on the sum capacity derived in \cite{OneBit2007}. In a low interference regime, we establish the existence of a genie, which results in a genie-aided channel whose sum capacity can be computed, and yet does not improve upon the sum capacity of the interference channel. Thus, we establish the sum capacity of the two-user Gaussian interference channel in this  {\em low interference} regime, where the interference parameters are below certain pre-computable thresholds. In this regime, we further establish that it is optimal for the receivers to employ single user decoders that treat the interference as noise.

\section{Interference Channel Model}
The two-user Gaussian interference channel that we study in this paper is in the standard form \cite{Carleial1978}, \cite{HK1981}. Over one symbol period the channel is described by
 \begin{equation}
 \begin{split}
Y_1 = & \ X_1 + h_{12}X_2 + Z_1 \\
Y_2 = & \ X_2 + h_{21}X_1 + Z_2
\label{channel_model}
\end{split}
\end{equation}
with inputs $X_1, X_2$, and corresponding outputs $Y_1, Y_2$. The receiver noise terms $Z_1$ and $Z_2$ are assumed to be independent, zero-mean, unit variance Gaussian random variables, and the interference parameters $h_{12}$ and $h_{21}$ are assumed to be real numbers. The transmit power constraints on users 1 and 2 are $\snr_1$ and $\snr_2$, respectively. The noise terms are assumed to be independent and identically distributed (i.i.d.) in time.

For each user $i$, let the message index ($m_i$) be uniformly distributed over $\{1,2,\ldots,2^{nR_i}\}$ and $\mathcal{C}_i(n)$ be a code consisting of an encoding function $X_i^n: \{1,2,\ldots,2^{nR_i}\} \rightarrow \mathbb{R}^n$ satisfying the power constraint \[ ||X_i^n(m_i)||^2 \leq nP_i, \forall m_i \in \{1,2,\ldots,2^{nR_i}\} \] and a decoding function $g_i: \mathbb{R}^n \rightarrow \{1,2,\ldots,2^{nR_i}\}$. The corresponding probability of decoding error $\lambda_i(n)$ defined as $\prob[m_i \neq g_i(Y_i^n)]$. A rate pair $(R_1,R_2)$ is said to be achievable if there exists a sequence of codes $\{ \mathcal{C}_1(n),\mathcal{C}_2(n) \}$ such that the error probabilities $\lambda_1(n)$ and $\lambda_2(n)$ go to zero as $n$ goes to infinity.

\subsection{Notation}
The variables $S_1$ and $S_2$ denote the side information given to receivers $1$ and $2$, respectively. The variables $X_{1G}$ and $X_{2G}$ denote zero-mean Gaussian random variables with variances $P_1$ and $P_2$, respectively. The variables $Y_{1G},S_{1G},Y_{2G}$ and $S_{2G}$ denote the Gaussian outputs and side information that result when the channel inputs are Gaussian, i.e., when  $X_1 = X_{1G}$ and $X_2 = X_{2G}$.
\section{Symmetric Interference Channel}
The essential ideas and results of this paper are captured in the {\em symmetric }interference channel, for which  $P_1 = P_2 = P$ and $h_{12} = h_{21} = h$. For this channel we shall establish the following result.
\begin{theorem}
For the symmetric interference channel, if the interference parameter $h$ satisfies the condition
\begin{eqnarray}
|h +h^3P| \leq .5
\label{condition_mainresult}
\end{eqnarray}
then treating interference as noise achieves the sum capacity, which is given by
\[
\Csum = \log\left(1+\frac{P}{1+h^2P}\right)
\]
\label{mainresult}
\end{theorem}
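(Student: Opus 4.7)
The plan is to match the trivial achievability $\log(1+P/(1+h^2P))$ (obtained by treating interference as noise) with a genie-aided outer bound. I would provide receiver $i$ with appropriately chosen Gaussian side information $S_i$ and try to make the sum capacity of the genie-aided channel collapse exactly to $\log(1+P/(1+h^2P))$ under the hypothesis (\ref{condition_mainresult}).

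Concretely, I would pick $S_i = h X_{3-i} + N_i$, where $(N_1,N_2)$ is zero-mean jointly Gaussian and jointly distributed with the channel noises $(Z_1,Z_2)$, parameterized by a variance $\sigma^2$ and a cross-correlation $\rho$ to be optimized. Since providing side information cannot decrease capacity, Fano's inequality yields
\[
n(R_1+R_2)\leq I(X_1^n;Y_1^n,S_1^n)+I(X_2^n;Y_2^n,S_2^n)+n\epsilon_n.
\]
Applying the chain rule as $I(X_i^n;Y_i^n,S_i^n)=I(X_i^n;S_i^n)+I(X_i^n;Y_i^n\mid S_i^n)$, using independence of $X_1^n$ and $X_2^n$, and exploiting the symmetry of the two summands, the bound reduces to a single-letter expression in the differential entropies of $(X_{1G},X_{2G},S_{1G},S_{2G},Y_{1G},Y_{2G})$ and their conditional counterparts. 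The genie is designed so that $hX_{3-i}+Z_i$, the interference-plus-noise at receiver $i$, is a ``degraded'' version (in the conditional-covariance sense) of the noise appearing in $S_{3-i}$; this structural property is what will make Gaussian inputs optimal for the reduced bound.

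The pivotal step is proving that Gaussian inputs maximize this single-letter bound. I would invoke a worst-additive-noise / Liu--Viswanath-style extremal inequality: given the degradedness condition on $(S_i,Z_{3-i})$, any input of variance at most $P$ yields a smaller $h(Y_i\mid S_i)$ than the Gaussian one, while the terms $h(S_i\mid X_{3-i})$ and $h(Y_i\mid X_i,S_i)$ contribute Gaussian constants. Evaluating the resulting Gaussian bound and simultaneously requiring it to equal $\log(1+P/(1+h^2P))$ fixes $(\sigma^2,\rho)$ uniquely; the algebraic feasibility of such a choice is what produces the scalar condition $|h+h^3P|\leq 0.5$.

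I expect the main obstacle to be the extremal-inequality step, because the genie couples the two users and naive scalar entropy-power arguments do not immediately close the bound. One must either arrange the conditioning on $S_i$ so that a univariate worst-noise argument applies per letter, or apply a vector extremal inequality. In either case the analysis should reduce to verifying positive semidefiniteness of a small covariance matrix built from $(h,P,\sigma^2,\rho)$; eliminating $(\sigma^2,\rho)$ between the PSD constraint and the requirement that the Gaussian bound equal the target rate collapses, after some algebra, to $\tfrac{1}{4}-(h+h^3P)^2\geq 0$, which is exactly (\ref{condition_mainresult}).
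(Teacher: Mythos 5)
Your high-level architecture matches the paper's (a Gaussian genie, Fano plus the chain rule, a worst-additive-noise step to get Gaussian optimality, and a feasibility condition on the genie parameters that collapses to $|h+h^3P|\le 0.5$), but the proposal has a concrete structural problem and leaves the pivotal step unresolved. Your genie points the wrong way: you give receiver $i$ a noisy copy of the \emph{interference} $hX_{3-i}$, whereas the proof that actually closes gives receiver $i$ a noisy copy of its \emph{own} signal, $S_i = hX_i + h\eta W_i$, with $W_i$ correlated with $Z_i$ (correlation $\rho$). This is not cosmetic. The step you flag as ``the main obstacle'' is resolved in the paper by a cross-receiver pairing: after expanding both Fano bounds, the term $h(S_1^n)$ from receiver 1 is grouped with $-h(Y_2^n\mid S_2^n,X_2^n)$ from receiver 2. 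Both are entropies of $hX_1^n$ plus Gaussian noise (variances $h^2\eta^2$ and $1-\rho^2$, respectively), so their difference equals $-I(V^n;hX_1^n+V_1^n+V^n)$ for an auxiliary Gaussian $V$, and the worst-case additive noise result bounds this by its Gaussian value whenever $|h\eta|\le\sqrt{1-\rho^2}$. With your $S_i=hX_{3-i}+N_i$, the troublesome term becomes $h(hX_2^n+Z_1^n\mid hX_2^n+N_1^n)$, whose only available partner, $h(Y_2^n\mid S_2^n)$, involves $X_2^n$ at a different gain and under different conditioning; the difference is not a mutual information, and no per-letter worst-noise argument applies. So the extremal-inequality step you defer would in fact fail for the genie you chose.

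Separately, even granting Gaussian optimality, you never isolate the two constraints that generate the threshold: the \emph{useful} condition $|h\eta|\le\sqrt{1-\rho^2}$ (so the genie-aided sum capacity is computable) and the \emph{smart} condition $\eta\rho=1+h^2P$ (so that $S_{iG}$ is conditionally independent of $X_{iG}$ given $Y_{iG}$, whence the genie adds nothing to the Gaussian treating-interference-as-noise rate). Eliminating $\eta$ gives $|h|(1+h^2P)\le\rho\sqrt{1-\rho^2}$, and the maximum of $\rho\sqrt{1-\rho^2}$ over $\rho\in[0,1]$ is $1/2$, which is where the constant $0.5$ comes from. Your statement that eliminating $(\sigma^2,\rho)$ ``collapses, after some algebra, to $\tfrac{1}{4}-(h+h^3P)^2\ge 0$'' asserts precisely the conclusion that the missing computation must deliver; as written, the proposal is a plan rather than a proof.
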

\subsection{Existing Bounds}
A natural way to deal with  interference between users is to treat interference as noise if the interference is weak, and to orthogonalize the users if the interference is moderate.  Therefore, the sum capacity of the symmetric interference channel is easily seen to be lower bounded as:
\begin{equation}
\Csum \geq \log\left(1 + \frac{P}{1+h^2P}\right)
\label{sumrate_tin}
\end{equation}
\begin{equation}
\Csum \geq \log\left(1 + 2P\right)
\label{sumrate_odm}
\end{equation}
The optimality of either of these simple strategies is not clear and has not been established previously. More sophisticated strategies such as splitting power into private and common messages, which require multiuser decoders and knowledge of the interfering users' codebooks, have been proposed  by Han and Kobayashi \cite{HK1981}.  A simplified version of the Han-Kobayashi strategy was  recently shown to produce an achievable region that is  within one bit of the capacity region \cite{OneBit2007}.

Regarding upper bounds on the sum capacity, genie-based arguments have been used in  \cite{Kramer2004, OneBit2007} to obtain the following:
\begin{equation}
\Csum \leq \log\left(1 + h^2P + \frac{P}{1+h^2P}\right)
\label{upperbound_onebit}
\end{equation}
\begin{equation}
\Csum \leq \frac{1}{2}\log\left(1 + P\right) +  \frac{1}{2}\log\left(1 + \frac{P}{1+h^2P}\right)
\label{upperbound_kramer}
\end{equation}
The upper bound given in \eqref{upperbound_onebit}, which we refer to as the One-Bit bound,  is asymptotically tight in the low interference regime \cite{OneBit2007}. The upper bound given in \eqref{upperbound_kramer}, the Z-Channel bound, is asymptotically tight in the moderate interference interference regime. (See Fig.~\ref{fig_mainresult}.)

In this paper, the upper bound given in  (\ref{upperbound_onebit}) is tightened to establish Theorem~\ref{mainresult}. Furthermore, the upper bound given in (\ref{upperbound_kramer}) is shown to be a special case of Theorem~\ref{sum_capacity_asym}, which extends Theorem~\ref{mainresult} to the asymmetric interference channel.

\begin{figure}
\centering
\includegraphics[width=.5\textwidth]{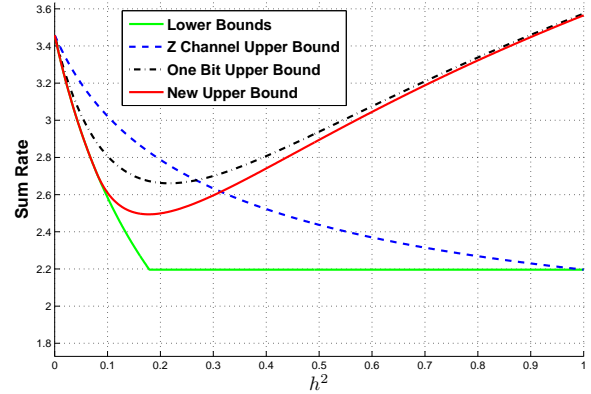}
\caption{Bounds on the sum capacity, $P = 10$ dB}
\label{fig_mainresult}
\end{figure}
\subsection{Proof of Theorem~\ref{mainresult}}
To prove Theorem~\ref{mainresult}, we need to establish an upper bound on $\Csum$ that matches the lower bound given in (\ref{sumrate_tin}), when condition (\ref{condition_mainresult}) is satisfied. As in \cite{Kramer2004, OneBit2007}, our upper bound is based on a genie giving side information to the receivers. The genie needs to be chosen wisely in order to produce the tightest possible upper bound. To this end, we introduce the following two qualities of a good genie.
\subsubsection{Useful Genie}
Obtaining tight outer bounds on the capacity region of  multiuser Gaussian channels is generally hindered by the fact that we cannot assume a simple structure (e.g., Gaussian) for the interference seen from other users.  One way around this problem  is to let a genie provide side information to the receivers in such a way that outer bounds can be derived for the genie-aided channel. In the context of the two-user interference channel of interest in this paper, we call a genie {\em useful}, if the sum capacity of the genie-aided channel can be derived.
An example of useful genie is one that provides side information $S_1 = X_2$ to receiver 1 and side information $S_2 = X_1$ to receiver 2, because the resulting genie-aided channel has no interference. However, being too generous, such a genie does not result in a tight upper bound.
This leads us to the notion of a {smart} genie.
\subsubsection{Smart Genie}
We call a genie smart if it results in a tight upper bound, i.e.,  it should not give too much information to the receivers. The ``smartest" genie, of course, is one that does not interact with the receivers at all; however, it is obviously not useful.

So the essential question is: \emph{Is there a genie that is both useful and smart?}  The question was partly answered in \cite{OneBit2007}, where the genie that results in the upper bound of \eqref{upperbound_onebit} is useful and  {\em asymptotically} smart. What we are looking for is a ``divine genie" that allows us to prove Theorem~\ref{mainresult}.

The quest for the divine genie can be simplified by imposing a structure on the side information it provides. Following \cite{OneBit2007}, we set:
\begin{equation} \label{genie_form}
\begin{split}
S_1 = & \ hX_1 + h\eta{}W_1 \\
S_2 = & \ hX_2 + h\eta{}W_2
\end{split}
\end{equation}
where $\eta$ is a positive real number.
However, unlike in \cite{OneBit2007}, we allow  $W_1$ to be correlated to $Z_1$ (and $W_2$ with $Z_2$), with correlation coefficient  $\rho$.
\begin{lemma}[Useful Genie]
The sum capacity of the genie-aided channel with side information given in \eqref{genie_form} is achieved by using Gaussian inputs and by treating interference as noise if the following condition holds.
\begin{eqnarray}
|h\eta| \leq \sqrt{1-\rho^2}
\label{condition_useful}
\end{eqnarray}
Hence the sum capacity of the symmetric interference channel described  is bounded as
\begin{equation}
\Csum \leq I(X_{1G};Y_{1G},S_{1G}) + I(X_{2G};Y_{2G},S_{2G})
\label{sumcapacity_genieaided}
\end{equation}
\label{useful_genie}
\end{lemma}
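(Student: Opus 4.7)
To prove the lemma I would show that the maximum of $I(X_1; Y_1, S_1) + I(X_2; Y_2, S_2)$ over all input distributions with $E[X_i^2] \leq P_i$ is achieved by Gaussian inputs under the condition $|h\eta| \leq \sqrt{1-\rho^2}$; the bound \eqref{sumcapacity_genieaided} then follows by single-letterizing the genie-aided sum capacity via Fano's inequality and memorylessness.

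Using the joint Gaussianity of $(Z_i, W_i)$ with correlation $\rho$, I would write $Z_i = \rho W_i + \sqrt{1-\rho^2}\,N_i$ with $N_i$ a standard Gaussian independent of $W_i$. Conditioning on $W_i$ absorbs the correlated part of $Z_i$ as a deterministic shift, so that
\[
h(hX_j + Z_i,\, h\eta W_i) \;=\; h(h\eta W_i) + h(hX_j + \sqrt{1-\rho^2}\,N_i),
\]
and hence $I(X_i; Y_i, S_i) = h(Y_i, S_i) - h(h\eta W_i) - h(hX_j + \sqrt{1-\rho^2}\,N_i)$, with $j \neq i$. Summing over $i \in \{1,2\}$ and comparing with the Gaussian value, the target inequality $\sum_i I(X_i; Y_i, S_i) \leq \sum_i I(X_{iG}; Y_{iG}, S_{iG})$ rearranges into
\[
\sum_{i} \bigl[h(Y_{iG}, S_{iG}) - h(Y_i, S_i)\bigr] \;\geq\; \sum_{i} \bigl[h(hX_{jG} + \sqrt{1-\rho^2}\,N_i) - h(hX_j + \sqrt{1-\rho^2}\,N_i)\bigr],
\]
with each bracketed term a nonnegative Gaussian deficit.

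The hypothesis $|h\eta| \leq \sqrt{1-\rho^2}$ enters through the further decomposition
\[
\sqrt{1-\rho^2}\,N_i \;=\; h\eta\,\tilde W_i + \sqrt{1-\rho^2 - h^2\eta^2}\,M_i,
\]
with $\tilde W_i, M_i$ independent standard Gaussians, which exhibits the interference-noise variable as the side-information noise $h\eta W_j$ plus extra independent Gaussian noise. This covariance-degradation relation is exactly the hypothesis under which an extremal inequality of the Liu--Viswanath type guarantees that Gaussian inputs jointly maximize the relevant weighted combination of differential entropies. Applied to the rearranged inequality it delivers the required majorization of the 2-D joint deficits on the left by the 1-D interference-noise deficits on the right, completing the argument.

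The main obstacle is the extremal inequality step. A direct term-by-term Gaussian maximum-entropy argument fails because the deficit in the 2-D joint $h(Y_i, S_i)$ involves \emph{both} $X_i$ and $X_j$ while the 1-D interference-noise deficit involves only $X_j$, so the cross-term contributions have the wrong signs for a pairwise bound. It is only after summing over both users and exploiting the covariance ordering furnished by $|h\eta| \leq \sqrt{1-\rho^2}$ that a Gaussian input emerges as the sum-rate maximizer; verifying the corresponding KKT conditions at the Gaussian extremizer (or, equivalently, invoking a suitable vector entropy power inequality) is the technical heart of the proof.
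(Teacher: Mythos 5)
Your setup is correct as far as it goes: the identity $I(X_i;Y_i,S_i)=h(Y_i,S_i)-h(h\eta W_i)-h(hX_j+\sqrt{1-\rho^2}\,N_i)$ is right, and the rearranged inequality you write down is indeed equivalent to what must be shown. But the proof stops exactly where the lemma's difficulty begins. The entire content of the "useful genie" claim is the Gaussian optimality of the sum, and you dispose of it by citing "an extremal inequality of the Liu--Viswanath type" whose applicability you do not verify and whose KKT conditions you explicitly defer. As written, this is an appeal to an unproved (and unstated) black box, not a proof. Your diagnostic paragraph also misidentifies the remedy: you conclude that because a per-user, term-by-term maximum-entropy argument fails, one must resort to a global vector extremal inequality. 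That is not the route the paper takes, and it is heavier machinery than the problem requires.

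The paper's mechanism is a cross-user regrouping of exactly the terms you wrote down. Expanding $h(Y_i^n,S_i^n)=h(S_i^n)+h(Y_i^n|S_i^n)$, it pairs $h(S_1^n)=h(hX_1^n+V_1^n)$ with $h(Y_2^n|S_2^n,X_2^n)=h(hX_1^n+V_2^n)$, where $V_1\sim\mathcal{N}(0,h^2\eta^2)$ and $V_2\sim\mathcal{N}(0,1-\rho^2)$ --- both entropies of the \emph{same} input $X_1^n$ in Gaussian noise of different variances. The hypothesis $|h\eta|\le\sqrt{1-\rho^2}$ is used precisely to write $V_2=V_1+V$ with $V$ independent, so that the difference becomes $-I(V^n;hX_1^n+V_1^n+V^n)$, which is maximized by i.i.d.\ Gaussian $X_1^n$ by the worst-case additive noise lemma of Diggavi and Cover. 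The leftover terms $h(Y_i^n|S_i^n)$ are handled separately by the conditional maximum-entropy bound together with concavity in the powers. This reduces everything to known single-user results; no new extremal inequality, no KKT verification. To repair your proposal you would need either to carry out this regrouping (at the $n$-letter level --- your one-line "single-letterize via Fano and memorylessness" also glosses over a step that is not automatic for a genie-aided interference channel) or to actually state and prove the extremal inequality you invoke, which is the hard part you have skipped.
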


\begin{proof}
Using Fano's inequality, we have
\[
\begin{split}
n(R_1 - \epsilon_n) & \leq   I(X_1^n;Y_1^n,S_1^n) \\
& =   I(X_1^n;S_1^n) + I(X_1^n;Y_1^n|S_1^n) \\
& =  h(S_1^n)\! -\! h(S_1^n|X_1^n) \! + \! h(Y_1^n|S_1^n) \! -\! h(Y_1^n|S_1^n,X_1^n)\\
& \stackrel{(a)} =  h(S_1^n) - nh(S_{1G}|X_{1G}) \\
&~~~+ h(Y_1^n|S_1^n) - h(Y_1^n|S_1^n,X_1^n)\\
& \stackrel{(b)} \leq   h(S_1^n) - nh(S_{1G}|X_{1G})\\
&~~~+ nh(Y_{1G}|S_{1G}) - h(Y_1^n|S_1^n,X_1^n)
\end{split}
\]
where step (a) follows from that fact that $h(S_1^n|X_1^n) = h(h\eta{}W_1^n)$ is independent of the distribution of $X_1^n$; and in step (b) we use the facts that 1) the Gaussian distribution maximizes the conditional differential entropy for a given covariance constraint, and 2) the function
\[
h(Y_{1G}|S_{1G}) = \frac{1}{2}\log\left[2\pi\text{e}\left(1 - \rho_1^2 + h^2P_2 + \frac{P_1(\rho_1-\eta_1)^2}{P_1 + \eta_1^2}\right)\right]
  \]
is an increasing and concave function in $P_1$ and $P_2$.
Similarly, we have
\[
\begin{split}
n(R_2 - \epsilon_n) \leq  & \ h(S_2^n) - nh(S_{2G}|X_{2G}) \\
&+  nh(Y_{2G}|S_{2G}) - h(Y_2^n|S_2^n,X_2^n)
\end{split}
\]
Thus $n(R_1 + R_2 - 2\epsilon_n) $ is upper bounded by
\[
\begin{split}
& h(S_1^n) - h(Y_2^n|S_2^n,X_2^n) - nh(S_{1G}|X_{1G}) + nh(Y_{1G}|S_{1G}) \\
& + h(S_2^n) - h(Y_1^n|S_1^n,X_1^n)  - nh(S_{2G}|X_{2G}) + nh(Y_{2G}|S_{2G})
\end{split}
\]
Now consider the expression
\[
\begin{split}
h(S_1^n) - h(Y_2^n|S_2^n,X_2^n)= & \ h(hX_1^n+h\eta{}W_1^n) \\
 &~~~- h(hX_1^n+Z_2^n|W_2^n)\\
= & \ h(hX_1^n+V_1^n) - h(hX_1^n+V_2^n)
\end{split}
\]
where $V_1 \sim \mathcal{N}(0,h^2\eta^2)$ and $V_2 \sim \mathcal{N}(0,1-\rho^2)$. Let $V_1$ and $V_2$ be correlated such that $V_2 = V_1 + V$, for some Gaussian random variable $V$ independent of $V_1$, which is possible if $1 - \rho^2 \geq h^2\eta^2$, i.e., \eqref{condition_useful} holds. Thus
\[
\begin{split}
h(S_1^n) - h(Y_2^n|S_2^n,X_2^n) & = h(hX_1^n + V_1^n) - h(hX_2^n + V_2^n) \\
& =  -I(V^n;aX_1^n+V_1^n+V^n) \\
& \stackrel{(a)} \leq  -nI(V;hX_{1G}+V_1+V) \\
& =  nh(S_{1G}) - nh(Y_{2G}|S_{2G},X_{2G}).
\end{split}
\]
where step (a) uses the worst case noise result for the additive noise channel \cite{Diggavi2001}: Gaussian i.i.d. noise with the maximum allowable variance minimizes the mutual information when the input distribution is i.i.d. Gaussian. Therefore $n(R_1 + R_2 - 2\epsilon_n)$ is upper bounded by
\[
\begin{split}
& nh(S_{1G})\! -\! nh(Y_{2G}|S_{2G},X_{2G})\!  +\! nh(S_{2G})\! - \!nh(Y_{1G}|S_{1G},X_{1G}) \\
& \! \! \!\! -nh(S_{1G}|X_{1G})\! +\! nh(Y_{1G}|S_{1G})\! - \! nh(S_{2G}|X_{2G})\! +\! nh(Y_{2G}|S_{2G}) \\
& =  n I(X_{1G};Y_{1G},S_{1G}) + n I(X_{2G};Y_{2G},S_{2G})
\end{split}
\]
and the lemma follows by letting $n \to \infty$ with $\epsilon_n \to 0$.
\end{proof}
\smallskip

\begin{remark}
If the genie does not satisfy (\ref{condition_useful}), it might still be useful. Lemma~\ref{useful_genie} only claims the  `\emph{if}' part, and not the `\emph{only if}' part.
\end{remark}
\smallskip

\begin{lemma}[Smart Genie]
If Gaussian inputs are used, the interference is treated as noise, and the following condition holds
\begin{equation}
\eta\rho = 1+h^2P
\label{condition_smart}
\end{equation}
then the genie does not increase the achievable sum rate, i.e.,
\begin{equation}
\begin{split}
I(X_{1G};Y_{1G}, S_{1G}) = & \ I(X_{1G};Y_{1G}) \\
I(X_{2G};Y_{2G}, S_{2G}) = & \ I(X_{2G};Y_{2G})
\end{split}
\end{equation}
\label{smart_genie}
\end{lemma}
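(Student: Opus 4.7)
The plan is to rewrite the claim in a form that exploits joint Gaussianity. By the chain rule,
\[
I(X_{1G};Y_{1G},S_{1G}) = I(X_{1G};Y_{1G}) + I(X_{1G};S_{1G}\mid Y_{1G}),
\]
so proving the lemma reduces to showing $I(X_{1G};S_{1G}\mid Y_{1G})=0$, i.e.\ the Markov chain $X_{1G}\to Y_{1G}\to S_{1G}$ holds. The second identity in the lemma then follows by an identical argument with indices $1$ and $2$ swapped, so I will only carry out the first.

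For jointly Gaussian variables, the Markov chain $X_{1G}\to Y_{1G}\to S_{1G}$ is equivalent to the conditional covariance condition
\[
\mathrm{Cov}(X_{1G},S_{1G}) \,\mathrm{Var}(Y_{1G}) \;=\; \mathrm{Cov}(X_{1G},Y_{1G})\,\mathrm{Cov}(Y_{1G},S_{1G}).
\]
So the second step is a straightforward calculation of the four second moments from the definitions $Y_{1G}=X_{1G}+hX_{2G}+Z_1$ and $S_{1G}=hX_{1G}+h\eta W_1$, using $\mathrm{Var}(X_{iG})=P$, $\mathrm{Var}(Z_1)=\mathrm{Var}(W_1)=1$, $E[Z_1W_1]=\rho$, and independence of $X_{1G},X_{2G}$ from $(Z_1,W_1)$. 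This yields $\mathrm{Cov}(X_{1G},S_{1G})=hP$, $\mathrm{Cov}(X_{1G},Y_{1G})=P$, $\mathrm{Cov}(Y_{1G},S_{1G})=hP+h\eta\rho$, and $\mathrm{Var}(Y_{1G})=1+(1+h^2)P$.

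Substituting these into the covariance equation above and dividing by $hP$ collapses it to $1+h^2P = \eta\rho$, which is precisely the smart-genie hypothesis \eqref{condition_smart}. Hence the Markov chain holds, $I(X_{1G};S_{1G}\mid Y_{1G})=0$, and the lemma follows. There is no real obstacle here: the only conceptual point is the reduction of conditional independence to a single linear identity in covariances, which is specific to the Gaussian case; the rest is bookkeeping. One could alternatively prove it operationally by writing $S_{1G} = \alpha Y_{1G} + N$ with $\alpha = \mathrm{Cov}(Y_{1G},S_{1G})/\mathrm{Var}(Y_{1G})$ and verifying that \eqref{condition_smart} is exactly the condition making the residual $N$ uncorrelated with, hence independent of, $X_{1G}$; this gives the same computation in slightly different clothing.
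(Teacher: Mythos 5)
Your proof is correct and follows essentially the same route as the paper: both decompose via the chain rule, reduce the claim to the Markov chain $X_{1G}\to Y_{1G}\to S_{1G}$, and verify it by a second-moment computation that collapses to $\eta\rho = 1+h^2P$. The paper phrases the verification as ``$\eta W_1$ is a degraded version of $hX_{2G}+Z_1$,'' i.e.\ $\mathbf{E}[(\eta W_1)(hX_{2G}+Z_1)]=\mathbf{E}[(hX_{2G}+Z_1)^2]$, which is algebraically identical to your zero-partial-covariance condition.
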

\begin{proof} Since
\[
I(X_{1G};Y_{1G}, S_{1G}) = I(X_{1G};Y_{1G}) + I(X_{1G};S_{1G}|Y_{1G})
\]
we need to determine when $I(X_{1G};S_{1G}|Y_{1G}) = 0$.
Now,
\[
\begin{split}
 I(X_{1G};S_{1G}|Y_{1G})& \\
& \hspace*{-0.1in}= I(X_{1G};X_{1G} + \eta{}W_1|X_{1G} + hX_{2G} + Z_{1G})
\end{split}
\]
Hence $I(X_{1G};S_{1G}|Y_{1G}) = 0$,  if $\eta{}W_1$ is a degraded version of $hX_{2G} + Z_{1G}$,  i.e., if
\[
\mathbf{E}[(\eta{}W_1)(hX_{2G} + Z_1)] =  \mathbf{E}[(hX_{2G} + Z_1)(hX_{2G} + Z_1)]
\]
which happens when $\eta\rho =  1+h^2P$.
\end{proof}
\smallskip

The genie is smart and useful if it meets the conditions of both Lemma \ref{useful_genie} and Lemma~\ref{smart_genie}, i.e., when there exists a $\rho \in [0,1]$ such that
\[
|h+h^3P| \leq |\rho|\sqrt{1-\rho^2}
\]
 which is possible  if
 \[
 |h+h^3 P| \leq .5
 \]
  This completes the proof of Theorem~\ref{mainresult}.

\subsection{Geometric Interpretation}
We now provide a geometric interpretation of the construction of the genie that was used in proving Theorem~\ref{mainresult}. We begin with an evaluation of the mutual information terms on the  RHS of \eqref{sumcapacity_genieaided}. The  term $I(X_{1G};Y_{1G},S_{1G})$ can be expressed as
\[
\begin{split}
I(X_{1G};Y_{1G},S_{1G}) & = I(X_{1G};X_{1G}\!+\!hX_{2G}\!+\!Z_1,h X_{1G}\!+\!h \eta{}W_1)\\
&= I(X_{1G};X_{1G}+hX_{2G}+Z_1,X_{1G}+\eta{}W_1)
 \end{split}
 \]
 which is the mutual information between a  Gaussian random variable and two observations of this random variable in correlated Gaussian noise. The following lemma leads to a geometric interpretation of this mutual information.
\begin{lemma} \label{geom_lemma}
Let $E_i = X_{G} + N_i$, $i = 1\ldots m$, be noisy observations of a zero-mean Gaussian random variable $X_{G}$ with variance $P$, where the variables $N_i$ are arbitrary correlated zero mean Gaussian random variables. Then
\[
I(X_{G},\underline{E}) = \frac{1}{2} \log\left(1 + \frac{P}{\sigma^2}\right)
\]
where $\underline{E} = [E_1 \ldots E_m]^\top$ and
 \[
 \sigma^2 = \inf_{\underline{b}: \ \sum_{i = 1}^{m}b_i = 1}\mathbf{E} \left[(\underline{b}^\top \underline{E} - X_{G})^2 \right]
 \]
\end{lemma}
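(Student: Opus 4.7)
The plan is to show that among all linear combinations $\underline{b}^\top\underline{E}$ with $\underline{1}^\top\underline{b}=1$, the minimum-variance one is a scalar sufficient statistic for $X_G$, which collapses the vector mutual information to a scalar one and yields the claimed formula.

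First I would reinterpret $\sigma^2$. Since $\underline{1}^\top\underline{b}=1$, we have $\underline{b}^\top\underline{E}-X_G = \underline{b}^\top\underline{N}$, so
\[
\sigma^2 \;=\; \inf_{\underline{1}^\top\underline{b}=1} \underline{b}^\top K_N \underline{b},
\]
where $K_N$ is the covariance matrix of $\underline{N}$. That is, $\sigma^2$ is the variance of the best linear unbiased estimator (BLUE) of $X_G$ from $\underline{E}$. Lagrangian stationarity on the constraint gives the characterization $K_N\underline{b}^{\ast} = \lambda\underline{1}$ for some scalar $\lambda$, i.e.\ $\underline{b}^{\ast}\propto K_N^{-1}\underline{1}$ (and in the degenerate case one restricts to the range of $K_N$).

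Next I would exhibit sufficiency of $T := \underline{b}^{\ast\top}\underline{E} = X_G + \underline{b}^{\ast\top}\underline{N}$. For any $\underline{c}$ with $\underline{c}^\top\underline{1}=0$, the combination $\underline{c}^\top\underline{E}=\underline{c}^\top\underline{N}$ is independent of $X_G$. Its covariance with $T$, using $K_E = P\,\underline{1}\,\underline{1}^\top + K_N$, is
\[
\underline{c}^\top K_E \underline{b}^{\ast} \;=\; P(\underline{c}^\top\underline{1})(\underline{1}^\top\underline{b}^{\ast}) + \underline{c}^\top K_N \underline{b}^{\ast} \;=\; 0 + \lambda\,\underline{c}^\top\underline{1} \;=\; 0,
\]
so joint Gaussianity makes the $(m-1)$-dimensional subspace $\{\underline{c}^\top\underline{E} : \underline{c}^\top\underline{1}=0\}$ independent of both $X_G$ and $T$. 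Hence $T$ is a sufficient statistic for $X_G$ given $\underline{E}$, and $I(X_G;\underline{E})=I(X_G;T)$.

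Finally $T = X_G + \underline{b}^{\ast\top}\underline{N}$ is a scalar Gaussian channel observation with noise variance $\sigma^2$, so $I(X_G;T)=\tfrac{1}{2}\log(1+P/\sigma^2)$, which is the desired identity. The only conceptual step that needs care is the sufficient-statistic claim; once one spots that the BLUE direction is precisely $K_N^{-1}\underline{1}$, the orthogonality computation above does all the work, and the rest is a standard scalar Gaussian mutual-information evaluation.
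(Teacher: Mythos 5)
Your proof is correct, and while it rests on the same underlying idea as the paper's --- collapse $I(X_G;\underline{E})$ to a scalar Gaussian channel through a linear sufficient statistic --- the execution is genuinely different and more self-contained. The paper argues in two strokes: the data processing inequality gives $I(X_G;\underline{E}) \ge \sup_{\underline{b}} I(X_G;\underline{b}^\top\underline{E})$, and the standard fact that the MMSE estimator of $X_G$ from jointly Gaussian $\underline{E}$ is linear \emph{and} a sufficient statistic gives the matching upper bound; the identification of the optimizing $\underline{b}$ with the minimizer defining $\sigma^2$, and the final evaluation of the scalar mutual information, are left implicit. You bypass the DPI step entirely, pin down the optimizer via the Lagrangian condition $K_N\underline{b}^{\ast}=\lambda\underline{1}$, and verify sufficiency of the BLUE $T=\underline{b}^{\ast\top}\underline{E}$ by an explicit orthogonality computation on the complementary hyperplane $\{\underline{c}:\underline{c}^\top\underline{1}=0\}$. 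What this buys: you never need to invoke the MMSE-sufficiency fact as a black box, and you cleanly resolve a normalization point the paper glosses over --- the MMSE estimator is a scalar multiple of the BLUE and does not itself satisfy $\sum_i b_i=1$, whereas your $\underline{b}^{\ast}$ does by construction, so the scalar channel $T=X_G+\underline{b}^{\ast\top}\underline{N}$ has noise variance exactly equal to the $\sigma^2$ in the statement. Two minor points to tighten: say explicitly that the $(m-1)$-dimensional block is independent of the \emph{pair} $(X_G,T)$ jointly (each pairwise zero covariance plus joint Gaussianity gives this, and it is what the chain rule $I(X_G;\underline{E})=I(X_G;T)+I(X_G;\underline{E}\mid T)$ actually requires), and note that in the degenerate case where $\underline{1}$ lies outside the range of a singular $K_N$ one gets $\sigma^2=0$ with both sides of the identity infinite, so nothing breaks.
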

\smallskip

The proof of the lemma is relegated to the Appendix. A geometric interpretation of the lemma is provided in  Fig.~\ref{fig:geom_lemma}.
\begin{figure}
\centering
\includegraphics[width=0.5\textwidth]{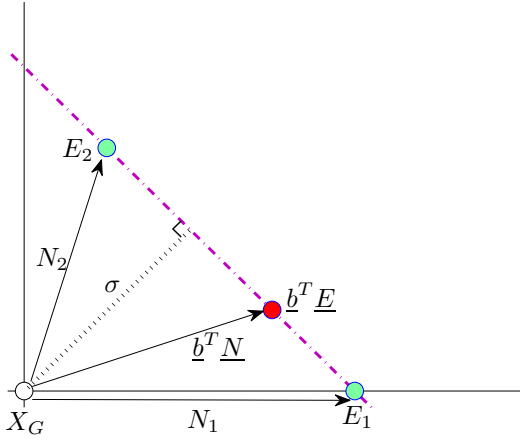}
\caption{The random variables $E_1$ and $E_2$ are represented as vectors with $X_G$ at the origin. The random variable $\underline{b}^\top \underline{E}$ is on the line joining $E_1$ and $E_2$, and $\sigma$  is distance of this line from the origin.}
\label{fig:geom_lemma}
\end{figure}
Specializing Lemma~\ref{geom_lemma} to the case $m=2$ we get the following result for the mutual information term on the RHS of \eqref{sumcapacity_genieaided}.
\begin{lemma}
\[
I(X_{1G};Y_{1G},S_{1G}) = \frac{1}{2}\log\left(1+\frac{P}{\sigma^2}\right)
\]
 where $\sigma$ is the distance from origin to the line joining the points $Q_{Y_1}$ and $Q_{S_1}$ corresponding to $Y_{1G}$ and $S_{1G}$. In polar coordinates (see Fig.~\ref{polar}),
\[
\begin{split}
Q_{Y_1} & =  (\sqrt{1+h^2P},0) \\
Q_{S_1} &  =  (\eta,\theta)
\end{split}
\]
where $\cos\theta$ is the correlation coefficient between $hX_{2G}+Z_1$ and $\eta{}W_1$, i.e.,
\[
\begin{split}
\cos\theta & =  \frac{\mathbf{E}[W_1(hX_{2G}+Z_1)]}{\sqrt{\mathbf{E}[(hX_{2G}+Z_1)(hX_{2G}+Z_1)]}}\\
& =  \frac{\rho}{\sqrt{1+h^2P}}
\end{split}
\]
\label{mvue_line}
\end{lemma}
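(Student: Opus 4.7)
The plan is to reduce Lemma~\ref{mvue_line} to Lemma~\ref{geom_lemma} with $m=2$ after a trivial rescaling, and then to identify the infimum there with the stated geometric distance. First I would note that $S_{1G} = hX_{1G} + h\eta W_1$ carries exactly the same information about $X_{1G}$ as $\tilde{S}_{1G} := S_{1G}/h = X_{1G} + \eta W_1$, so $I(X_{1G};Y_{1G},S_{1G}) = I(X_{1G};Y_{1G},\tilde{S}_{1G})$. Both observations are now in the form $X_{1G} + N_i$ required by Lemma~\ref{geom_lemma}, with $N_1 = hX_{2G} + Z_1$ and $N_2 = \eta W_1$.

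Applying Lemma~\ref{geom_lemma} then gives $I(X_{1G};Y_{1G},S_{1G}) = \tfrac{1}{2}\log(1+P/\sigma^2)$, where $\sigma^2 = \inf_{b_1+b_2=1}\mathbf{E}[(b_1 N_1 + b_2 N_2)^2]$ because $b_1 E_1 + b_2 E_2 - X_{1G} = b_1 N_1 + b_2 N_2$ whenever $b_1 + b_2 = 1$. The remaining task is to recognize this infimum as a distance in the Hilbert space $L^2$ of zero-mean square-integrable random variables equipped with inner product $\langle U,V\rangle = \mathbf{E}[UV]$.

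In this geometric picture, placing $X_{1G}$ at the origin identifies $Q_{Y_1}$ and $Q_{S_1}$ with the tips of the vectors $N_1$ and $N_2$. One computes $\|N_1\|^2 = 1 + h^2P$ and $\|N_2\|^2 = \eta^2$ (using that $W_1$ has unit variance), so the radial coordinates of $Q_{Y_1}$ and $Q_{S_1}$ are $\sqrt{1+h^2P}$ and $\eta$ as claimed. Since in the genie construction $W_1$ is independent of $X_{2G}$ and has correlation coefficient $\rho$ with $Z_1$, we get $\langle N_1, N_2\rangle = \eta\rho$, and therefore $\cos\theta = \eta\rho/(\eta\sqrt{1+h^2P}) = \rho/\sqrt{1+h^2P}$.

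Finally, the affine combinations $\{b_1 N_1 + b_2 N_2 : b_1 + b_2 = 1\}$ trace out exactly the line through $Q_{Y_1}$ and $Q_{S_1}$, so the infimum of the norm over this set is the perpendicular distance from the origin to that line. There is no substantive obstacle; the only care needed is in the bookkeeping of scalings, in particular that rescaling $S_{1G}$ by $1/h$ is what makes $\|N_2\| = \eta$ rather than $h\eta$, and in making explicit the independence of $W_1$ from $X_{2G}$ used in computing the cross-correlation.
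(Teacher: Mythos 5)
Your proposal is correct and follows essentially the same route as the paper: the paper performs the same rescaling $S_{1G}\mapsto S_{1G}/h$ in the displayed computation just before the lemma, and then simply invokes Lemma~\ref{geom_lemma} with $m=2$ together with the Hilbert-space picture of Fig.~\ref{fig:geom_lemma}. Your write-up just makes explicit the $L^2$ inner-product bookkeeping ($\|N_1\|^2=1+h^2P$, $\|N_2\|^2=\eta^2$, $\langle N_1,N_2\rangle=\eta\rho$) that the paper leaves to the reader.
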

\begin{remark}
($\eta,\theta$) is an alternate description of the genie that is equivalent to the description ($\eta,\rho$).
\end{remark}
The conditions for the genie to be useful (\ref{condition_useful}) and smart (\ref{condition_smart}) can be transformed into the following conditions (\ref{condition_useful_polar}) and (\ref{condition_smart_polar}),  respectively.
\begin{itemize}
\item {\em Useful Genie: } The genie is useful, if the $(\eta,\theta)$ lies inside the dashed curve in Fig.~\ref{polar}. This region is specified by
\begin{equation}
h^2\eta^2 + (1+h^2P)\cos^2\theta \leq 1
\label{condition_useful_polar}
\end{equation}
\item {\em Smart Genie: } From Lemma \ref{mvue_line}, the genie is smart if $(\eta,\theta)$ lies on the line parallel to y-axis passing through the point $(\sqrt{1+h^2P},0)$, i.e., if
\begin{equation}
\eta\cos\theta = \sqrt{1+h^2P}
\label{condition_smart_polar}
\end{equation}
\end{itemize}
There exists a useful and smart genie if the region specified by (\ref{condition_useful_polar}) intersects with that specified by (\ref{condition_smart_polar}), which is true if (\ref{condition_mainresult}) holds.
\begin{figure}
\centering
\includegraphics[width=0.5\textwidth]{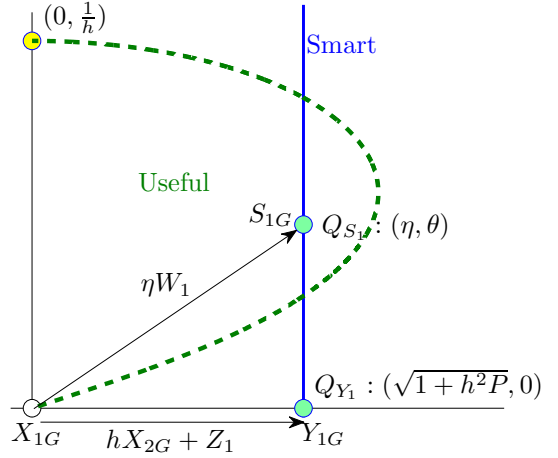}
\caption{The genie is a) useful if it lies inside the dashed curve,  and b) smart if it lies on the solid line. If the dashed curve and solid line intersect, treating interference as noise achieves sum capacity.}
\label{polar}
\end{figure}
\subsection{Upper bound when (\ref{condition_mainresult}) does not hold}
The importance of the geometric intuition will be more evident when the condition (\ref{condition_mainresult}) is not met, i.e., when  the solid line and the dashed curve do not intersect in Fig.~\ref{polar}. In this case, it is of interest to pick the best genie within the class specified in \eqref{genie_form}. The following theorem uses such a genie to obtain an upper bound on the sum capacity .
\begin{theorem}
If  $|h +h^3 P| > .5$
\begin{equation}
\Csum \leq \log\left[1 + \frac{P}{1+h^2P}\left(1 + \frac{1}{\mu^2}\right)\right]
\label{upperbound_tangent}
\end{equation}
where $\mu$ is the slope of the tangent from $(\sqrt{1+h^2P},0)$ to the curve (\ref{condition_useful_polar}).
\end{theorem}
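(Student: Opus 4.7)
The plan is to combine the useful-genie upper bound of Lemma~\ref{useful_genie} with the geometric characterization from Lemma~\ref{mvue_line}, and then optimize over the class of genies parameterized by $(\eta,\theta)$. By Lemma~\ref{useful_genie}, for every genie of the form \eqref{genie_form} satisfying \eqref{condition_useful} (equivalently \eqref{condition_useful_polar}), we have
\[
\Csum \leq I(X_{1G};Y_{1G},S_{1G}) + I(X_{2G};Y_{2G},S_{2G}).
\]
In the symmetric channel we can pick identical genies for the two receivers, so by Lemma~\ref{mvue_line} this gives
\[
\Csum \leq 2\cdot \tfrac{1}{2}\log\!\left(1+\tfrac{P}{\sigma^{2}}\right)=\log\!\left(1+\tfrac{P}{\sigma^{2}}\right),
\]
where $\sigma$ is the distance from the origin to the line through the fixed point $Q_{Y_{1}}=(\sqrt{1+h^{2}P},0)$ and the genie point $Q_{S_{1}}=(\eta,\theta)$. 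Minimizing the right-hand side is therefore equivalent to maximizing $\sigma$ over admissible genies, i.e., over $Q_{S_{1}}$ lying inside the dashed useful region of Fig.~\ref{polar}.

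Next I would carry out the geometric optimization. The locus $Q_{S_{1}}$ is constrained to the closed ellipse \eqref{condition_useful_polar}, which does not contain $Q_{Y_{1}}$ under the hypothesis $|h+h^{3}P|>0.5$ (this is exactly the failure of the intersection in Theorem~\ref{mainresult}). Among all lines through the external point $Q_{Y_{1}}$ that meet the ellipse, the one whose perpendicular distance from the origin is largest is the line tangent to the ellipse; any secant lies strictly closer to the origin than the tangent on the same side, because rotating the line about $Q_{Y_{1}}$ past the tangent forces it to miss the ellipse entirely. Thus the optimal $Q_{S_{1}}$ is the point of tangency, and the optimal line has slope $\mu$ as defined in the theorem.

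Finally I would compute $\sigma$ for this tangent. The tangent line can be written as $y=\mu(x-\sqrt{1+h^{2}P})$, or $\mu x-y-\mu\sqrt{1+h^{2}P}=0$, so
\[
\sigma=\frac{|\mu|\sqrt{1+h^{2}P}}{\sqrt{1+\mu^{2}}},\qquad \sigma^{2}=\frac{\mu^{2}(1+h^{2}P)}{1+\mu^{2}}.
\]
Substituting,
\[
\frac{P}{\sigma^{2}}=\frac{P(1+\mu^{2})}{\mu^{2}(1+h^{2}P)}=\frac{P}{1+h^{2}P}\!\left(1+\frac{1}{\mu^{2}}\right),
\]
which yields \eqref{upperbound_tangent}.

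The main obstacle is the geometric optimization step: one must verify rigorously that the maximum of the origin-to-line distance, taken over lines through $Q_{Y_{1}}$ that meet the useful region, is achieved by a tangent line and that this tangent line corresponds to a valid choice of $(\eta,\rho)\in\mathbb{R}\times[-1,1]$ in the class \eqref{genie_form} (so that Lemma~\ref{useful_genie} actually applies on the boundary). Everything else—the reduction from $\Csum$ to an expression in $\sigma$, the symmetry between the two users, and the algebraic simplification yielding $(1+1/\mu^{2})$—is a direct consequence of the earlier lemmas.
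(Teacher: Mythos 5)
Your proof is correct and takes essentially the same route as the paper: choose the genie at the point where the tangent from $(\sqrt{1+h^2P},0)$ touches the boundary of \eqref{condition_useful_polar}, compute $\sigma^{2}=\mu^{2}(1+h^{2}P)/(1+\mu^{2})$ from the point-to-line distance formula, and apply Lemma~\ref{mvue_line}. The geometric optimization you flag as the main obstacle is not actually required for the stated bound---exhibiting the single tangent genie (which lies on the boundary of the useful region and hence still satisfies the non-strict condition \eqref{condition_useful} of Lemma~\ref{useful_genie}) already yields \eqref{upperbound_tangent}.
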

\begin{proof}
As illustrated in Fig.~\ref{fig:tangent}, we
choose the genie corresponding to the point where the tangent touches the curve. Let $y = \mu x + c$ be equation of the tangent. Since the line passes through $(\sqrt{1+h^2P},0)$, we have $c^2 = \mu^2(1+h^2P)$. The distance $\sigma$ from origin to the tangent satisfies
\[
\sigma^2 = \frac{c^2}{\mu^2 + 1} = (1+h^2P)\frac{\mu^2}{\mu^2+1}
\]
Thus, by Lemma \ref{mvue_line}, the result follows. \end{proof}
\begin{figure}
\centering
\includegraphics[width=0.5\textwidth]{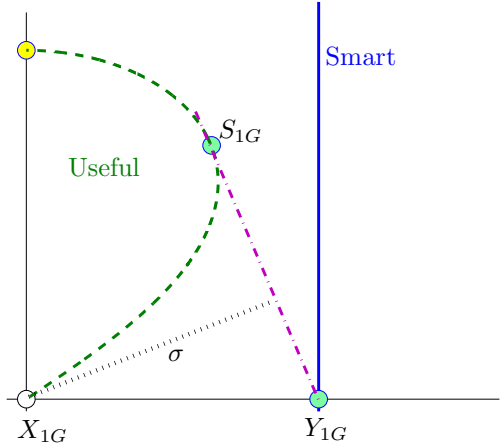}
\caption{Geometric derivation of the upper bound on the sum capacity when (\ref{condition_mainresult}) does not hold.}
\label{fig:tangent}
\end{figure}

\section{Asymmetric Interference Channel}
\begin{theorem}
For the asymmetric interference channel with interference parameters $h_{12}$ and $h_{21}$, suppose  there exist  $\rho_1 \in [0,1]$ and $\rho_2 \in [0,1]$ such that
\begin{equation}
\begin{split}
|h_{12}(1+h_{21}^2P_1)| & \leq  \rho_2\sqrt{1-\rho_1^2} \\
|h_{21}(1+h_{12}^2P_2)| & \leq  \rho_1\sqrt{1-\rho_2^2}
\label{condition_asym}
\end{split}
\end{equation}
Then treating interference as noise achieves sum capacity, which is given by
\[
\Csum = \frac{1}{2}\log\left(1+\frac{P_1}{1+h_{12}^2P_2}\right)  + \frac{1}{2}\log\left(1+\frac{P_2}{1+h_{21}^2P_1}\right)
\]
\label{sum_capacity_asym}
\end{theorem}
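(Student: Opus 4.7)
The plan is to prove Theorem~\ref{sum_capacity_asym} by reprising the useful-genie/smart-genie argument of Theorem~\ref{mainresult} with two independent parameter pairs $(\eta_1,\rho_1)$ and $(\eta_2,\rho_2)$, one per receiver. Specifically, I would posit genie side information of the form
\[
S_1 = h_{21}(X_1 + \eta_1 W_1), \qquad S_2 = h_{12}(X_2 + \eta_2 W_2),
\]
where each $W_i$ is a zero-mean unit-variance Gaussian correlated with $Z_i$ with coefficient $\rho_i$. The prefactor $h_{21}$ in $S_1$ (resp.\ $h_{12}$ in $S_2$) is the interference gain with which $X_1$ (resp.\ $X_2$) is seen at the \emph{opposite} receiver; this pairing is what enables the cross-entropy manipulation below.

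First I would generalize Lemma~\ref{useful_genie} by repeating its proof with subscripts inserted throughout. The critical cross-entropy comparison becomes
\[
h(S_1^n) - h(Y_2^n\mid S_2^n,X_2^n) = h(h_{21}X_1^n + h_{21}\eta_1 W_1^n) - h(h_{21}X_1^n + Z_2^n\mid W_2^n),
\]
in which the first entropy's additive noise has variance $h_{21}^2\eta_1^2$ and the conditional entropy's effective noise has variance $1-\rho_2^2$. The worst-case-noise step from the symmetric proof then goes through exactly when $|h_{21}\eta_1|\leq\sqrt{1-\rho_2^2}$, and the analogous manipulation of $h(S_2^n)-h(Y_1^n\mid S_1^n,X_1^n)$ requires $|h_{12}\eta_2|\leq\sqrt{1-\rho_1^2}$. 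Together these yield the genie-aided upper bound $\Csum \leq I(X_{1G};Y_{1G},S_{1G}) + I(X_{2G};Y_{2G},S_{2G})$.

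Next I would generalize Lemma~\ref{smart_genie}: requiring $I(X_{iG};S_{iG}\mid Y_{iG})=0$ for $i=1,2$ reduces, exactly as in the symmetric case, to $\eta_i W_i$ being a degraded version of $h_{ij}X_{jG}+Z_i$ ($j\neq i$), which by matching second moments gives
\[
\eta_1\rho_1 = 1 + h_{12}^2 P_2, \qquad \eta_2\rho_2 = 1 + h_{21}^2 P_1.
\]
Eliminating $\eta_1,\eta_2$ between these smart conditions and the two useful conditions recovers exactly the hypothesis \eqref{condition_asym}. Under \eqref{condition_asym}, the smart property gives $I(X_{iG};Y_{iG},S_{iG}) = I(X_{iG};Y_{iG}) = \tfrac{1}{2}\log(1+P_i/(1+h_{ij}^2P_j))$ for real Gaussian inputs with single-user decoding, and summing the two terms matches the treating-interference-as-noise lower bound, completing the proof.

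The only real obstacle is careful index bookkeeping: the useful constraint on $\eta_1$ involves $\rho_2$ (not $\rho_1$), because the cross-entropy pairing in the useful-genie manipulation matches receiver~1's side information $S_1$ against receiver~2's conditional output entropy $h(Y_2^n\mid S_2^n,X_2^n)$, and vice versa. Once this cross-pairing is respected, every algebraic step of the symmetric proof lifts directly. The geometric picture of Fig.~\ref{polar} also extends, now giving one such dashed-curve/solid-line diagram per receiver; feasibility of the two smart-and-useful intersections corresponds exactly to the coupled conditions in \eqref{condition_asym}.
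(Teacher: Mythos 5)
Your proposal is correct and follows essentially the same route as the paper: the same genie side information $S_1 = h_{21}(X_1+\eta_1 W_1)$, $S_2 = h_{12}(X_2+\eta_2 W_2)$, the same cross-paired useful conditions $|h_{21}\eta_1|\leq\sqrt{1-\rho_2^2}$, $|h_{12}\eta_2|\leq\sqrt{1-\rho_1^2}$, and the same smart conditions $\eta_1\rho_1 = 1+h_{12}^2P_2$, $\eta_2\rho_2 = 1+h_{21}^2P_1$, whose combination yields \eqref{condition_asym}. Your remark about the index cross-pairing in the useful-genie step is exactly the point the paper's proof relies on.
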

\begin{proof}
The proof is similar to that for the symmetric interference channel. We set the genie-aided side information as:
\[
\begin{split}
S_1 = & \ h_{21}(X_1 + \eta_1W_1) \\
S_2 = & \ h_{12}(X_2 + \eta_2W_2)
\end{split}
\]
Let $\rho_1$ be the correlation between $Z_1$ and $W_1$ (and $\rho_2$ the correlation between $Z_2$ and $W_2$). Using the same arguments as in Lemma~\ref{useful_genie}, the genie is useful if
\[
\begin{split}
|h_{21}\eta_1| \leq & \ \sqrt{1-\rho_2^2} \\
|h_{12}\eta_2| \leq & \ \sqrt{1-\rho_1^2}
\end{split}
\]
Also, as in Lemma \ref{smart_genie}, the genie is smart if
\[
\begin{split}
\eta_1\rho_1 = & \ 1+h_{12}^2P_2 \\
\eta_2\rho_2 = & \ 1+h_{21}^2P_1
\end{split}
\]
\end{proof}
\begin{remark}
The condition \eqref{condition_asym} is equivalent to
\begin{equation}
|h_{12}(1+h_{21}^2P_1)| + |h_{21}(1+h_{12}^2P_2)| \leq 1
\label{condition_asym_simple}
\end{equation}
\end{remark}
\smallskip

\begin{proof}
 Set $\rho_1 = \cos\phi_1$ and $\rho_2 = \cos\phi_2$. Then
 \[
 \rho_2\sqrt{1-\rho_1^2} + \rho_1\sqrt{1-\rho_2^2} = \sin(\phi_1 + \phi_2) \leq 1
 \]
 Thus \eqref{condition_asym} implies \eqref{condition_asym_simple}. On the other hand, if \eqref{condition_asym_simple} is satisfied, we can find $\phi$ such that
 \[
 |h_{12}(1+h_{21}^2P_1)| \leq cos^2\phi \leq 1 - |h_{21}(1+h_{12}^2P_2)|
 \]
 i.e.,
 \[
 \begin{split}
 |h_{12}(1+h_{21}^2P_1)| \leq & \cos^2\phi \\
 |h_{21}(1+h_{12}^2P_1)| \leq & \sin^2\phi
 \end{split}
 \]
 Setting $\rho_1 = \sin\phi$ and $\rho_2 = \cos\phi$, we have \eqref{condition_asym}.
\end{proof}
\begin{remark}
The sum capacity of the one-sided interference channel \cite{Kramer2004}, \cite{Sason2004}, \cite{OneBit2007} and hence the Z-channel outer bound on the sum capacity of the symmetric interference channel \eqref{upperbound_kramer} are immediate corollaries of Theorem~\ref{sum_capacity_asym}.
\end{remark}
\section{Conclusion}
We used a genie-aided channel to derive new upper bounds on the sum capacity of the two-user Gaussian interference channel. We introduced the notions of {\em useful genie} and {\em smart genie}. A genie is useful if the sum capacity of the genie-aided channel can easily be derived, and smart if the sum capacity of the genie-aided channel is the same as that of the interference channel. We showed that when the interference levels are below certain thresholds, we can construct a genie that is both useful and smart. Thus we established the sum capacity of the interference channel in the low interference regime, and furthermore showed that it is optimal for the receivers to treat the interference as noise in this regime. We were recently informed by G. Kramer that Theorem~\ref{sum_capacity_asym} has been independently established in \cite{Shang-Kramer-Chen-2007,Motahari-Khandani-2007}.

The notion of a useful and smart genie is generalizable to interference channels with more than two users. We are currently working on establishing sum capacity results for such interference channels.

\section*{Acknowledgment}
This research was supported in part by the NSF award CCF 0431088, through the University of Illinois, by a Vodafone Foundation Graduate Fellowship, and a grant from Texas Instruments.

\section*{Appendix}

{\em Proof of Lemma~\ref{geom_lemma}:}
From Data processing inequality, it follows that
\[
I(X_{G}; \underline{E}) \geq  I(X_{G};\underline{b}^\top\underline{E}), ~~\forall \underline{b}
\]
i.e., that
\[
I(X_{G}; \underline{E}) \geq  \sup_{\underline{b}: \ \sum_{i = 1}^{m}b_i = 1} I(X_{G};\underline{b}^\top\underline{E})
\]
Since $X_{G}$ and $\underline{N}$ are Gaussian, the minimum mean squared-error (MMSE) estimator of the random variable $X_{G}$ based on $\underline{E}$  is a linear function of $\underline{E}$ and is also a sufficient statistic. Hence $I(X_{G},\underline{E}) = I(X_{G};\underline{b}^\top \underline{E})$ for some $\underline{b}$. Therefore,
\[
I(X_{G},\underline{E}) = \sup_{\underline{b}: \ \sum_{i = 1}^{m}b_i = 1} I(X_{G};\underline{b}^\top\underline{E})
\]
and the lemma follows.

\bibliographystyle{IEEE}
\bibliography{ITA2008}

\end{document}